\documentclass[letterpaper, 10 pt, conference]{ieeeconf}
\IEEEoverridecommandlockouts

\usepackage{geometry}
\newgeometry{top=0.75in, bottom=0.75in, right=0.75in, left=0.75in}

\usepackage{amsmath,amssymb,amsfonts}

\usepackage{amsthm}
\usepackage{mathtools}
\usepackage{bm}

\usepackage{graphicx}
\usepackage{subcaption}
\usepackage{booktabs}
\usepackage{multirow}

\usepackage{algorithm}
\usepackage{algorithmic}

\usepackage[hidelinks]{hyperref}
\hypersetup{
  pdftitle={Data-Driven Reachability Analysis via Diffusion Models},
  pdfauthor={Yanliang Huang, Peng Xie, Wenyuan Wu, Zhuoqi Zeng, Amr Alanwar},
  pdfsubject={Data-driven reachability analysis, diffusion models, conformal prediction},
  pdfkeywords={Reachability analysis, diffusion models, conformal prediction, PAC guarantee}
}
\usepackage{url}
\usepackage{cite}

\usepackage{xcolor}
\usepackage{tikz}

\newtheoremstyle{cdcplain}{}{}{}{}{\itshape}{.}{ }{}
\newtheoremstyle{cdcdefn}{}{}{}{}{\itshape}{.}{ }{}

\theoremstyle{cdcplain}
\newtheorem{theorem}{Theorem}

\newtheorem{proposition}{Proposition}

\theoremstyle{cdcdefn}
\newtheorem{definition}{Definition}
\newtheorem{remark}{Remark}
\newtheorem{assumption}{Assumption}
\newtheorem{problem}{Problem}

\newcommand{\R}{\mathbb{R}}
\newcommand{\E}{\mathbb{E}}
\newcommand{\Prob}{\mathbb{P}}

\renewcommand{\QED}{\QEDopen}

\graphicspath{{figures/}}

\title{\LARGE \bf
Data-Driven Reachability Analysis via Diffusion Models \\ with PAC Guarantees
}

\author{Yanliang Huang$^{1}$, Peng Xie$^{1}$, Wenyuan Wu$^{1}$, Zhuoqi Zeng$^{2}$, and Amr Alanwar$^{1}$%
\thanks{$^{1}$School of Computation, Information and Technology,
        Technical University of Munich, Munich, Germany.
        \texttt{\{yanliang.huang, p.xie, wenyuan.wu,
        alanwar\}@tum.de}}%
\thanks{$^{2}$School of Engineering,
        Hainan Bielefeld University of Applied Sciences, Hainan, China.
        \texttt{zhuoqi.zeng@hainan-biuh.edu.cn}}%
}

\begin{document}

\maketitle

\begin{abstract}
We present a data-driven framework for reachability analysis
of nonlinear dynamical systems that requires no explicit model.
A denoising diffusion probabilistic model learns the
time-evolving state distribution of a dynamical system
from trajectory data alone.
The predicted reachable set takes the form of a sublevel set
of a nonconformity score derived from the reconstruction error,
with the threshold calibrated via the Learn Then Test procedure
so that the probability of excluding a reachable state
is bounded with high probability.
Experiments on three nonlinear systems, a forced Duffing oscillator,
a planar quadrotor, and a high-dimensional reaction-diffusion system,
confirm that the empirical miss rate remains below
the Probably Approximately Correct (PAC) bound while scaling to state dimensions
beyond the reach of classical grid-based and
polynomial methods.
\end{abstract}

\section{Introduction}
\label{sec:introduction}

\begin{figure*}[t]
    \centering
    \includegraphics[width=\textwidth]{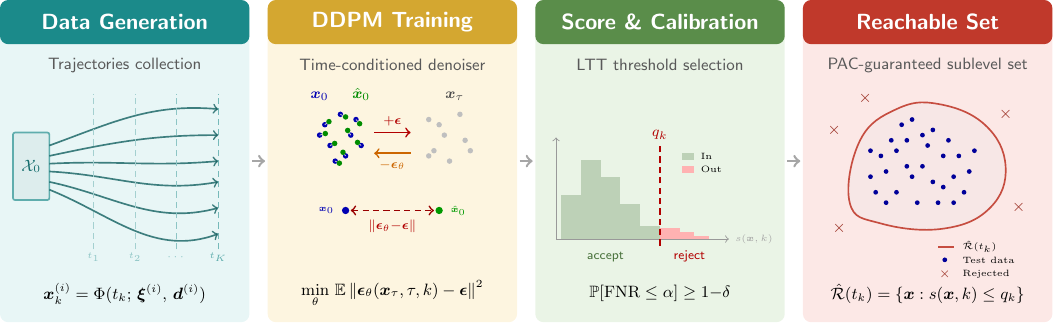}
    \caption{Pipeline overview for the data-driven reachability analysis via diffusion models. From left to right: (1)~trajectory data is collected from the nonlinear dynamical system at discrete time steps; (2)~a time-conditioned DDPM is trained to denoise states, with the reconstruction error serving as a nonconformity score; (3)~the LTT procedure selects a threshold~$q_k$ for the nonconformity score that controls the false negative rate with PAC guarantee; (4)~the predicted reachable set is the sublevel set of all states whose score does not exceed~$q_k$.}
    \label{fig:pipeline}
\end{figure*}

Reachability analysis determines the set of all states
a dynamical system can visit from a given set of initial conditions \cite{althoff2021set}.
The resulting reachable sets are central to safety
verification and control design.
Classical set-propagation methods compute
tight over-approximations of the reachable set, yet they
rely on an explicit analytical model of the system dynamics,
which may be unavailable or prohibitively complex
for high-dimensional or black-box systems.

Several data-driven methods avoid this dependence on
analytical models, but existing approaches with
finite-sample guarantees either rely on learned surrogate
models whose coverage degrades with model error, or
employ polynomial representations that scale poorly
with state dimension.
A method that bypasses surrogate construction,
provides finite-sample coverage guarantees,
and scales to high-dimensional states
has, to our knowledge, not been demonstrated.

\subsection{Related Work}
\label{sec:related_work}

Among model-based approaches, set-propagation tools
represent reachable sets with
zonotopes~\cite{girard2005reachability},
polynomial zonotopes~\cite{althoff2015cora},
or constrained polynomial
zonotopes~\cite{zhang2025data}.
Hamilton-Jacobi PDE methods~\cite{chen2018hamilton}
provide exact characterizations but scale poorly beyond
moderate dimensions.
Allen et al.~\cite{shia2014svm} use support vector
machines trained on pre-computed reachable/unreachable
labelings for real-time classification, but still require an
analytical model to generate the labels.
DeepReach~\cite{bansal2021deepreach} parameterizes the HJ
value function with a neural network, reducing memory
cost but without finite-sample coverage guarantees.

Several data-driven works compute statistical over-approximations
using matrix zonotopes~\cite{alanwar2023data},
Gaussian processes~\cite{devonport2020data},
or scenario-optimized sublevel
sets~\cite{dietrich2024scenario}, each with
probabilistic or deterministic containment guarantees.
Devonport et al.~\cite{devonport2021christoffel}
formalize probabilistic reachability and construct
polynomial sublevel sets via Christoffel functions
with VC-dimension sample bounds.
Devonport et al.~\cite{devonport2022christoffel_pac}
extend this with PAC-Bayes bounds.  Both remain
limited by a monomial basis that grows
combinatorially with state dimension.

Several recent works combine data-driven
reachability with conformal prediction.
Tebjou et al.~\cite{tebjou2023christoffel} augment
Christoffel-function sublevel sets with conformal
calibration.
Surrogate-based approaches learn a dynamics model
and apply conformal inference to quantify prediction
error~\cite{hashemi2023data,hashemi2024statistical,nath2026koopman}.
Huang et al.~\cite{huang2026cddr} combine
Learn Then Test (LTT)~\cite{angelopoulos2022learn}
with zonotope-based residual scores.
In contrast, our framework learns the state distribution
directly with a generative model, avoiding surrogate
construction entirely.

This paper fills this gap by combining two ideas.
The reconstruction error of
DDPMs~\cite{ho2020denoising,sohldickstein2015deep}
serves as an anomaly score~\cite{wyatt2022anoddpm}
that approximates the negative log-likelihood via the
evidence lower bound~\cite{li2023diffusion}.
The LTT framework~\cite{angelopoulos2022learn}
calibrates this score into PAC prediction sets,
yielding tighter thresholds than Bonferroni-corrected
quantile methods~\cite{vovk2005algorithmic,bates2021testing,sadinle2019least}.
The predicted reachable set is therefore the sublevel set
of the calibrated score
(Fig.~\ref{fig:pipeline}).

\subsection{Contribution}
The contributions are as follows:
\begin{enumerate}
  \item A data-driven reachability analysis framework
    with PAC coverage guarantees that scales to
    state dimensions intractable for
    existing grid-based and polynomial methods.
    A density lower-bound assumption further yields
    a geometric volume bound on missed states.
  \item Systematic experimental comparison on
    a forced Duffing oscillator, a planar quadrotor, 
    and a Gray-Scott reaction-diffusion system
    with 8192-dimensional states.
    The diffusion-based score
    yields the tightest predicted sets among all
    tested score functions.
\end{enumerate}

\section{Preliminaries and Problem Statement}
\label{sec:problem}

\subsection{Dynamical Systems and Reachable Sets}

Consider a dynamical system with a state transition function
$\Phi(t;\, \bm{\xi},\, \bm{d})$ that maps an initial state
$\bm{\xi} \in \R^n$ to the state at time~$t$
under an external input
$\bm{d} \colon [0,T_f] \to \R^m$.

Given an initial set $\mathcal{X}_0 \subset \R^n$
and a set of admissible external inputs~$\mathcal{W}$,
the forward reachable set is defined as follows.

\begin{definition}[Forward Reachable Set]
    \label{def:reach_set}
    The forward reachable set at time~$t$ is
    \begin{equation}
      \mathcal{R}(t)
      = \bigl\{\Phi(t;\, \bm{\xi},\, \bm{d})
        : \bm{\xi} \in \mathcal{X}_0,\;
          \bm{d} \in \mathcal{W} \bigr\}.
      \label{eq:reach_set}
    \end{equation}
\end{definition}

To formulate a probabilistic version of the reachability
problem, we introduce random variables
$\bm{\Xi}$ and $\bm{D}$ taking values in
$\mathcal{X}_0$ and $\mathcal{W}$,
respectively~\cite{devonport2021christoffel}.
At each discrete time step
$k \in \{0,\ldots,K{-}1\}$ with $t_k = k\,\Delta t$,
the composed random variable
$\Phi(t_k;\, \bm{\Xi},\, \bm{D})$ induces a probability
measure~$\mu_k$ on~$\R^n$.
When the distributions of $\bm{\Xi}$ and $\bm{D}$
have full support on $\mathcal{X}_0$ and $\mathcal{W}$,
the support of $\mu_k$ coincides with $\mathcal{R}(t_k)$.

The analytical form of~$\Phi$ is assumed unavailable.
A dataset of~$N$ independent trajectories is generated
by independently sampling an initial condition
$\bm{\xi}^{(i)} \sim p_0$ with
$\mathrm{supp}(p_0) \subseteq \mathcal{X}_0$
and an external input
$\bm{d}^{(i)}(\cdot) \sim p_{\bm{d}}$ with
$\mathrm{supp}(p_{\bm{d}}) \subseteq \mathcal{W}$,
and evaluating
$\bm{x}_k^{(i)} = \Phi(t_k;\,\bm{\xi}^{(i)},\,
\bm{d}^{(i)})$ for $k = 0, \ldots, K{-}1$,
following~\cite{devonport2021christoffel}.

\subsection{Problem Statement}

\begin{problem}
    \label{prob:pac_reach}
    Given the state transition function~$\Phi$,
    distributions~$p_0$ on~$\mathcal{X}_0$ and
    $p_{\bm{d}}$ on~$\mathcal{W}$,
    a risk level $\alpha \in (0,1)$, and a confidence
    parameter $\delta \in (0,1)$, construct predicted
    reachable sets
    $\hat{\mathcal{R}}(t_k) \subseteq \R^n$ for
    $k = 0, \ldots, K{-}1$ such that
    \begin{equation}
      \Prob_{\mathrm{cal}}\!\Bigl[
        \forall\, k \in \{0,\ldots,K{-}1\}:
        \Prob_{\bm{x} \sim \mu_k}\!\bigl[
          \bm{x} \notin \hat{\mathcal{R}}(t_k)
        \bigr]
        \leq \alpha
      \Bigr]
      \geq 1 - \delta,
      \label{eq:pac_guarantee}
    \end{equation}
    where the outer probability is over the random
    dataset used to construct~$\hat{\mathcal{R}}(t_k)$
    and $\bm{x} \sim \mu_k$ is drawn independently
    of that dataset.
\end{problem}

\section{Method}
\label{sec:method}

The framework has two stages:
learning the time-conditioned state distribution with a
diffusion model and calibrating a rejection threshold via
LTT to construct the predicted
reachable set.

\subsection{Reachability Analysis via Diffusion Models}
\label{sec:method_calibration}

The generative backbone is a
DDPM~\cite{ho2020denoising, sohldickstein2015deep}.
Its forward process corrupts a clean state $\bm{x}_0$
by adding Gaussian noise over $T$ diffusion steps
with variance schedule
$\{\beta_\tau\}_{\tau=1}^{T}$:
\begin{equation}
    q(\bm{x}_\tau \mid \bm{x}_0)
    = \mathcal{N}\!\bigl(\bm{x}_\tau;\,
    \sqrt{\bar\alpha_\tau}\,\bm{x}_0,\;
    (1{-}\bar\alpha_\tau)\,\bm{I}\bigr),
    \label{eq:forward_diffusion}
\end{equation}
where $\alpha_\tau = 1 - \beta_\tau$ and
$\bar\alpha_\tau = \prod_{s=1}^{\tau}\alpha_s$.
Equivalently,
\begin{equation}
    \bm{x}_\tau
    = \sqrt{\bar\alpha_\tau}\,\bm{x}_0
    + \sqrt{1{-}\bar\alpha_\tau}\,\bm{\epsilon},
    \quad
    \bm{\epsilon} \sim \mathcal{N}(\bm{0},\bm{I}).
    \label{eq:noisy_sample}
\end{equation}
A neural network $\bm{\epsilon}_\theta(\bm{x}_\tau,\tau,k)$
is trained to predict the added noise from the corrupted
sample, conditioned on both the diffusion timestep~$\tau$
and the physical time index~$k$
(Algorithm~\ref{alg:pipeline}, line~6).

After training, the model implicitly encodes the support
of each distribution~$\mu_k$.
A state inside $\mathcal{R}(t_k)$ produces a low
reconstruction error, while an out-of-distribution state
gives a large one.
The nonconformity score of a pair $(\bm{x},k)$ is
\begin{equation}
    s(\bm{x},k)
    = \sum_{\tau=1}^{T} \gamma_\tau\,
    \bigl\lVert
    \bm{\epsilon}_\theta(\bm{x}_{\tau},\tau,k)
    - \bm{\epsilon}_\tau
    \bigr\rVert^2,
    \label{eq:score}
\end{equation}
where $\gamma_\tau$ are non-negative weights
and $\bm{\epsilon}_\tau \sim \mathcal{N}(\bm{0},\bm{I})$.
Proposition~\ref{prop:score_density} shows that
the ELBO-weighted score equals the negative
ELBO plus a state-dependent quadratic term
and a constant.
Practical weight choices are discussed in
Section~\ref{sec:impl_details}.
Each $\bm{x}_{\tau}$ is constructed
via~\eqref{eq:noisy_sample} using
the same $\bm{\epsilon}_\tau$ that appears as the
prediction target in~\eqref{eq:score}.

\begin{proposition}[Score--likelihood correspondence]
    \label{prop:score_density}
    Let $\bm{\epsilon}^*$ be the Bayes-optimal denoiser
    and assume Gaussian parameterization at all
    diffusion steps.
    Define the ideal score
    $s^*(\bm{x},k) = \sum_{\tau=1}^{T} \gamma_\tau
    \lVert\bm{\epsilon}^*(\bm{x}_\tau,\tau,k)
    - \bm{\epsilon}_\tau\rVert^2$,
    where $\gamma_\tau$ are the loss weights in the
    ELBO decomposition of~\cite{ho2020denoising}.
    Then its conditional expectation over the noise sequence
    $\bm{\epsilon}_{1:T} = (\bm{\epsilon}_1, \ldots, \bm{\epsilon}_T)$
    satisfies
    \begin{equation}
        \E_{\bm{\epsilon}_{1:T}}\bigl[
        s^*(\bm{x},k)\mid \bm{x},k\bigr]
        = -\mathrm{ELBO}(\bm{x}\mid k)
        - \tfrac{1}{2}\bar\alpha_T
        \lVert\bm{x}\rVert^2
        + C,
        \label{eq:score_density}
    \end{equation}
    where $C$ is a constant independent of~$\bm{x}$
and $\bar\alpha_T = \prod_{s=1}^{T}\alpha_s$.
\end{proposition}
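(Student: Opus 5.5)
I will start from the standard variational decomposition of the DDPM bound in~\cite{ho2020denoising}, conditioned on $k$:
\begin{equation*}
-\mathrm{ELBO}(\bm{x}\mid k)
= L_T(\bm{x}) + \sum_{\tau=2}^{T} L_{\tau-1}(\bm{x}) + L_0(\bm{x}),
\end{equation*}
where $L_T = D_{\mathrm{KL}}\bigl(q(\bm{x}_T\mid\bm{x})\,\|\,p(\bm{x}_T)\bigr)$, $L_{\tau-1} = \E_q\bigl[D_{\mathrm{KL}}\bigl(q(\bm{x}_{\tau-1}\mid\bm{x}_\tau,\bm{x})\,\|\,p_\theta(\bm{x}_{\tau-1}\mid\bm{x}_\tau,k)\bigr)\bigr]$, and $L_0 = -\E_q\log p_\theta(\bm{x}\mid\bm{x}_1,k)$. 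The plan is to evaluate each term under the Gaussian parameterization with fixed variances $\sigma_\tau^2$ and the $\bm{\epsilon}$-prediction mean. For this mean, the KL terms reduce to $\gamma_\tau\,\E_{\bm{\epsilon}_\tau}\lVert\bm{\epsilon}^*(\bm{x}_\tau,\tau,k)-\bm{\epsilon}_\tau\rVert^2$ plus constants. Here $\gamma_\tau = \beta_\tau^2/(2\sigma_\tau^2\alpha_\tau(1-\bar\alpha_\tau))$ are exactly the ELBO loss weights. The log-determinant and trace pieces of each Gaussian KL depend only on the schedule, so they are absorbed into $C$.

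\emph{Step 1 (noise consistency).} Each summand of $s^*$ depends only on its own $\bm{\epsilon}_\tau$, and $\bm{x}_\tau$ is built from that same $\bm{\epsilon}_\tau$ via~\eqref{eq:noisy_sample}. Hence $\E_{\bm{\epsilon}_{1:T}}[s^*\mid\bm{x},k]=\sum_\tau\gamma_\tau\E_{\bm{\epsilon}_\tau}\lVert\bm{\epsilon}^*-\bm{\epsilon}_\tau\rVert^2$, which matches the marginal expectation appearing in $L_{\tau-1}$. I will note that $L_{\tau-1}$ is an expectation over $q(\bm{x}_\tau\mid\bm{x})$, which is exactly the law of $\bm{x}_\tau$ in~\eqref{eq:noisy_sample}, so the two expectations agree.

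\emph{Step 2 (the $\tau=1$ term).} The reconstruction term $L_0$ with a Gaussian decoder of variance $\sigma_1^2$ gives $\tfrac{1}{2\sigma_1^2}\lVert\bm{x}-\bm{\mu}_\theta(\bm{x}_1,1)\rVert^2$ plus a constant. Rewriting $\bm{\mu}_\theta$ in $\bm{\epsilon}$-form turns this into the $\tau=1$ summand with the same weight formula. Here I will need to check the weight convention, i.e.\ whether $\gamma_1$ uses $\sigma_1^2$ or $\beta_1$; I will simply define $\gamma_1$ as whatever coefficient arises.

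\emph{Step 3 (the prior term; main obstacle).} $L_T$ has no trainable part, but it is not constant in $\bm{x}$. With $p(\bm{x}_T)=\mathcal{N}(\bm{0},\bm{I})$, the Gaussian KL formula gives
\begin{equation*}
L_T = \tfrac12\bigl[n(1-\bar\alpha_T) + \bar\alpha_T\lVert\bm{x}\rVert^2 - n - n\log(1-\bar\alpha_T)\bigr].
\end{equation*}
Thus $-\mathrm{ELBO} = \sum_\tau \gamma_\tau\E\lVert\bm{\epsilon}^*-\bm{\epsilon}_\tau\rVert^2 + \tfrac12\bar\alpha_T\lVert\bm{x}\rVert^2 + C'$. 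Rearranging yields~\eqref{eq:score_density}, with the quadratic entering with a minus sign relative to the ELBO. Bayes-optimality of $\bm{\epsilon}^*$ is used only to justify that the ELBO is the one attained by the optimal Gaussian reverse means; the identity itself holds for any $\bm{\epsilon}$-parameterized model. The delicate part is tracking the constants and signs carefully, since the sign of the $\bar\alpha_T$ term is the only nontrivial $\bm{x}$-dependence beyond the weighted denoising errors.
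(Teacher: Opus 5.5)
Your proposal is correct and follows essentially the same route as the paper: apply the ELBO decomposition of Ho et al., absorb the schedule-only pieces of each Gaussian KL into constants, and compute the prior KL $\tfrac12[\bar\alpha_T\lVert\bm{x}\rVert^2 - n\bar\alpha_T - n\ln(1-\bar\alpha_T)]$ to isolate the $\tfrac12\bar\alpha_T\lVert\bm{x}\rVert^2$ term, while your Steps 1--2 make explicit what the paper's proof takes for granted. Your $\tau=1$ check also works out: since $1-\bar\alpha_1=\beta_1$, the Gaussian decoder gives weight $\beta_1/(2\sigma_1^2\alpha_1)$, which is exactly the general formula $\beta_\tau^2/(2\sigma_\tau^2\alpha_\tau(1-\bar\alpha_\tau))$ evaluated at $\tau=1$.
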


\begin{proof}
    The ELBO decomposition of
    Ho et al.~\cite{ho2020denoising} yields
    \begin{align*}
        -\mathrm{ELBO}(\bm{x}\mid k)
        &= D_{\mathrm{KL}}\!\bigl(
        q(\bm{x}_T\mid\bm{x})\,\|\, p(\bm{x}_T)\bigr)
        \\
        &\quad + \sum_{\tau=1}^{T} \gamma_\tau\,
        \E_{\bm{\epsilon}_\tau}\!\bigl[
        \lVert\bm{\epsilon}^*(\bm{x}_\tau,\tau,k)
        - \bm{\epsilon}_\tau\rVert^2
        \mid \bm{x},k\bigr]\\
        &\quad + C',
    \end{align*}
    where each constant~$C_\tau$ depends only on the
    noise schedule $\beta_\tau, \bar\alpha_\tau$, so
    $C' = \sum_\tau C_\tau$ is independent of~$\bm{x}$.
    Identifying the weighted sum with
    $\E_{\bm{\epsilon}_{1:T}}[s^*(\bm{x},k)\mid \bm{x},k]$,
    it remains to evaluate the prior term.
    Since
    $q(\bm{x}_T\mid\bm{x})
    = \mathcal{N}(\sqrt{\bar\alpha_T}\,\bm{x},\,
    (1{-}\bar\alpha_T)\bm{I})$
    and $p(\bm{x}_T) = \mathcal{N}(\bm{0},\bm{I})$,
    the Gaussian KL formula gives
    \[
        D_{\mathrm{KL}}\!\bigl(
        q(\bm{x}_T\mid\bm{x})\,\|\, p(\bm{x}_T)\bigr)
        = \tfrac{1}{2}\bigl[
        \bar\alpha_T\lVert\bm{x}\rVert^2
        - n\bar\alpha_T
        - n\ln(1{-}\bar\alpha_T)\bigr],
    \]
    where $n = \dim(\bm{x})$ is the state dimension.
    Notice that $-n\bar\alpha_T - n\ln(1{-}\bar\alpha_T)$
    depends only on the state dimension~$n$ and the noise schedule,
    thus forming a constant $C_{\mathrm{prior}}$ independent of~$\bm{x}$.
    Therefore,
    \[
        D_{\mathrm{KL}}\!\bigl(
        q(\bm{x}_T\mid\bm{x})\,\|\, p(\bm{x}_T)\bigr)
        = \tfrac{1}{2}\bar\alpha_T
        \lVert\bm{x}\rVert^2 + C_{\mathrm{prior}}.
    \]
    Substituting this into the ELBO decomposition and
    rearranging for $\E_{\bm{\epsilon}_{1:T}}[s^*(\bm{x},k)\mid \bm{x},k]$
    yields~\eqref{eq:score_density}, where all constant
    terms are absorbed into~$C$.
\end{proof}

\begin{remark}\label{rem:score_interpretation}
    Since $\mathrm{ELBO}(\bm{x}\mid k)
    \leq \log p(\bm{x}\mid k)$,
    the score sublevel sets approximate the density
    level sets of~$\mu_k$, which are
    Neyman--Pearson optimal.  For typical noise schedules
    with $\bar\alpha_T \approx 0$, the additive terms
    in~\eqref{eq:score_density} approximately preserve
    score ranking and thus do not affect conformal
    calibration.
\end{remark}

\subsection{Conformal Prediction via LTT}
\label{sec:conformal_ltt}

The score~\eqref{eq:score} by itself does not provide a
coverage guarantee. A calibrated threshold is needed to
convert the continuous score into a binary accept/reject
decision with controlled error rate.
The statistical guarantee requires that calibration
samples at each step~$k$ are i.i.d.\ draws
from~$\mu_k$, which holds when trajectories are
generated from independent initial conditions
and external inputs.
Time-dependent thresholds
$\{q_k\}_{k=0}^{K-1}$ are obtained through the
LTT framework~\cite{angelopoulos2022learn}.
For each candidate threshold~$\lambda_l$ from a finite
grid~$\Lambda_k$, the empirical miss rate counts the
fraction of calibration scores exceeding~$\lambda_l$
(Algorithm~\ref{alg:pipeline}, line~13).
A Hoeffding--Bentkus p-value
(Algorithm~\ref{alg:pipeline}, line~14) combines a
sub-Gaussian concentration bound with the binomial CDF,
yielding a tighter test than either
alone~\cite[Thm.~1]{angelopoulos2022learn}.
Because the empirical risk is non-increasing
in~$\lambda_l$, no additional multiplicity correction
beyond LTT is needed.
The selected threshold and predicted reachable set are
\begin{equation}
    q_k = \min\bigl\{\lambda_l \in \Lambda_k
    : p_k(\lambda_l) \leq \delta / K\bigr\},
    \label{eq:threshold}
\end{equation}
\begin{equation}
    \hat{\mathcal{R}}(t_k)
    = \bigl\{\bm{x} \in \R^n : s(\bm{x},k) \leq q_k\bigr\}.
    \label{eq:reach_pred}
\end{equation}

\begin{algorithm}[t]
\caption{Data-driven reachability analysis
via diffusion models}
\label{alg:pipeline}
\begin{algorithmic}[1]
\renewcommand{\algorithmicrequire}{\textbf{Input:}}
\renewcommand{\algorithmicensure}{\textbf{Output:}}
\REQUIRE
State transition function~$\Phi$;
distributions~$p_0$ on~$\mathcal{X}_0$
and~$p_{\bm{d}}$ on~$\mathcal{W}$;
trajectory count~$N$;
risk level~$\alpha$;
confidence~$\delta$;
score weights~$\{\gamma_\tau\}$;
threshold grid~$\Lambda$
\ENSURE Predicted reachable sets
$\{\hat{\mathcal{R}}(t_k)\}_{k=0}^{K-1}$
\FORALL{$i \in \{1, \ldots, N\}$}
    \STATE Sample
    $\bm{\xi}^{(i)} \sim p_0$,\;
    $\bm{d}^{(i)} \sim p_{\bm{d}}$
    \STATE $\bm{x}_k^{(i)} \gets
    \Phi(t_k;\,\bm{\xi}^{(i)},\,\bm{d}^{(i)})$,\;
    $k = 0, \ldots, K{-}1$
\ENDFOR
\STATE Split
$\{\bm{x}_k^{(i)}\}_{i,k}$ into
$\mathcal{D}_{\mathrm{tr}}$,
$\mathcal{D}_{\mathrm{cal}}$
\STATE $\bm{\epsilon}_\theta \gets
\arg\min_\theta\;
\E_{\bm{x}_0,\bm{\epsilon},\tau,k}
\lVert\bm{\epsilon}_\theta(\bm{x}_\tau,\tau,k)
- \bm{\epsilon}\rVert^2$
on $\mathcal{D}_{\mathrm{tr}}$
\FOR{$k = 0,\ldots,K{-}1$}
    \FOR{$i = 1,\ldots,n_k$ in
    $\mathcal{D}_{\mathrm{cal}}$}
        \STATE Sample
        $\{\bm{\epsilon}_\tau\}_{\tau=1}^{T}
        \sim \mathcal{N}(\bm{0},\bm{I})$
        \STATE $\bm{x}^{(i)}_{\tau} \gets
        \sqrt{\bar\alpha_{\tau}}\,\bm{x}^{(i)}
        + \sqrt{1{-}\bar\alpha_{\tau}}\,
        \bm{\epsilon}_\tau$,\;
        $\tau{=}1,\ldots,T$
        \STATE $s_i \gets
        \sum_{\tau=1}^{T} \gamma_\tau
        \lVert\bm{\epsilon}_\theta(
        \bm{x}^{(i)}_{\tau},\tau,k)
        - \bm{\epsilon}_\tau\rVert^2$
    \ENDFOR
    \STATE $\hat{r}_k(\lambda) \gets
    \tfrac{1}{n_k}\sum_{i=1}^{n_k}
    \mathbf{1}[s_i {>} \lambda]$,\;
    $\forall\lambda \in \Lambda$
    \STATE $p_k(\lambda) \gets$
    HB p-value of $\hat{r}_k(\lambda)$
    \STATE $q_k \gets \min\{\lambda \in \Lambda :
    p_k(\lambda) \leq \delta/K\}$
    \STATE $\hat{\mathcal{R}}(t_k) \gets
    \{\bm{x}\in\R^n : s(\bm{x},k) \leq q_k\}$
\ENDFOR
\RETURN $\{\hat{\mathcal{R}}(t_k)\}_{k=0}^{K-1}$
\end{algorithmic}
\end{algorithm}

\begin{theorem}[PAC Reachability Coverage]
    \label{thm:pac}
    Assume the $n_k$ calibration samples at each step~$k$
    are i.i.d.\ draws from~$\mu_k$, and let the
    thresholds $\{q_k\}_{k=0}^{K-1}$ be computed
    via~\eqref{eq:threshold} with confidence
    budget~$\delta \in (0,1)$ and risk level
    $\alpha \in (0,1)$.
    Then the predicted reachable
    set~\eqref{eq:reach_pred} satisfies
    \begin{equation}
        \Prob_{\mathrm{cal}}\!\Bigl[
        \forall\, k \in \{0,\ldots,K{-}1\}:
        \Prob_{\bm{x} \sim \mu_k}\!
        \bigl[\bm{x} \notin \hat{\mathcal{R}}(t_k)\bigr]
        \leq \alpha
        \Bigr]
        \geq 1 - \delta.
        \label{eq:pac_theorem}
    \end{equation}
\end{theorem}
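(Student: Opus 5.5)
The plan is to prove the statement one step $k$ at a time, showing that the threshold selected in~\eqref{eq:threshold} fails to control the miss rate with probability at most $\delta/K$. A union bound over the $K$ time steps then gives~\eqref{eq:pac_theorem}.

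\textbf{Setup.} Fix $k$ and condition on the training set $\mathcal{D}_{\mathrm{tr}}$. This makes $\bm{\epsilon}_\theta$, and hence the score map $s(\cdot,k)$, a fixed (possibly randomized) function that is independent of the calibration data. For each $\lambda\in\Lambda_k$ define the true risk
\[
R_k(\lambda)=\Prob_{\bm{x}\sim\mu_k}\bigl[s(\bm{x},k)>\lambda\bigr],
\]
where the probability also integrates over the fresh noise draws $\bm{\epsilon}_{1:T}$ in the score. This makes $R_k(\lambda)$ exactly the miss probability of $\hat{\mathcal{R}}(t_k)$ at threshold $\lambda$. The function $R_k(\lambda)$ is non-increasing in $\lambda$. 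Because the calibration points are i.i.d.\ from $\mu_k$ and the noise is drawn independently, the indicators $\mathbf{1}[s_i>\lambda]$ are i.i.d.\ Bernoulli$(R_k(\lambda))$. Hence $n_k\hat r_k(\lambda)\sim\mathrm{Bin}(n_k,R_k(\lambda))$.

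\textbf{Validity of the p-values.} For each $\lambda$, consider the null $H_\lambda: R_k(\lambda)>\alpha$. By \cite[Thm.~1]{angelopoulos2022learn}, the Hoeffding--Bentkus p-value is super-uniform under $H_\lambda$:
\[
\Prob\bigl[p_k(\lambda)\le u\bigr]\le u \quad\text{for all } u\in[0,1].
\]
The underlying reason is that both the Hoeffding tail bound and the binomial CDF bound dominate the tail probability of the $\mathrm{Bin}(n_k,\alpha')$ distribution for any $\alpha'\ge\alpha$. I would import this result rather than reprove it.

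\textbf{Monotone structure (main obstacle).} Selecting the minimum $\lambda$ with $p_k(\lambda)\le\delta/K$ is not a single test, so super-uniformity alone does not finish the argument. I would handle it through fixed-sequence testing. Let $\lambda^*=\min\{\lambda\in\Lambda_k : R_k(\lambda)\le\alpha\}$, with $\lambda^*=+\infty$ if this set is empty. By monotonicity of $R_k$, the nulls $H_\lambda$ hold exactly for $\lambda<\lambda^*$. The failure event $R_k(q_k)>\alpha$ means $q_k<\lambda^*$, which requires $p_k(\lambda)\le\delta/K$ for some true null $\lambda$.

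To avoid a union bound over the whole grid, I would use the fact that both $\hat r_k$ and the HB p-value are monotone: $p_k(\lambda)$ is non-decreasing in $\hat r_k(\lambda)$, and $\hat r_k$ is non-increasing in $\lambda$. Let $\lambda^-$ be the largest grid point below $\lambda^*$; it satisfies $R_k(\lambda^-)>\alpha$. Rejecting any $\lambda\le\lambda^-$ forces $\hat r_k(\lambda^-)\le\hat r_k(\lambda)$. I then need the HB bound for $\lambda^-$ to be at most $\delta/K$ whenever the bound for $\lambda$ is. The cleanest route is to run a single test at $\lambda^-$ applied to the event $\{\hat r_k(\lambda^-)\le c\}$, where $c$ is the largest empirical risk accepted at level $\delta/K$. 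The HB p-value depends on $\lambda$ only through $\hat r_k(\lambda)$, so this event contains the failure event. Super-uniformity at $\lambda^-$ then bounds the failure probability by $\delta/K$.

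\textbf{Conclusion.} Taking expectation over $\mathcal{D}_{\mathrm{tr}}$ preserves the bound, so step $k$ fails with probability at most $\delta/K$ unconditionally. The union bound over $k=0,\ldots,K-1$ gives total failure probability at most $\delta$, which is~\eqref{eq:pac_theorem}.
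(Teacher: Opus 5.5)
Your proposal is correct and follows the same route as the paper: a per-step miss-rate guarantee at level $\delta/K$ obtained from LTT, followed by a union bound over the $K$ time steps. The paper cites LTT as a black box for the per-step bound, whereas you unpack it. You use super-uniformity of the Hoeffding--Bentkus p-value together with the monotonicity of $\hat r_k$ and $p_k$ in $\lambda$ to reduce the minimum-selection rule~\eqref{eq:threshold} to a single test at $\lambda^-$, which makes rigorous the paper's passing remark that no further multiplicity correction is needed.
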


\begin{proof}
    For each time step~$k$, the LTT
    procedure~\cite{angelopoulos2022learn} with
    budget~$\delta_k = \delta/K$ guarantees
    $\Prob_{\mathrm{cal}}\bigl[
    \Prob_{\bm{x}\sim\mu_k}(\bm{x}\notin
    \hat{\mathcal{R}}(t_k)) > \alpha\bigr]
    \leq \delta/K$.
    A union bound over the $K$ time steps yields
    \[
        \Prob_{\mathrm{cal}}\!\Bigl[
        \exists\, k:
        \Prob_{\bm{x}\sim\mu_k}\!
        \bigl[\bm{x}\notin\hat{\mathcal{R}}(t_k)\bigr]
        > \alpha\Bigr]
        \leq \sum_{k=0}^{K-1}\frac{\delta}{K}
        = \delta,
    \]
    and complementing gives~\eqref{eq:pac_theorem}.
\end{proof}

Tighter joint corrections such as Holm--Bonferroni
could reduce conservatism but would require ordering
the p-values across time steps.

\subsection{From Distributional to Geometric Coverage}
\label{sec:geometric}

\begin{assumption}[Positive initial density]
    \label{assum:density}
    The initial distribution~$p_0$ satisfies
    $p_0(\bm{\xi}) \geq c_0 > 0$ for all
    $\bm{\xi} \in \mathcal{X}_0$.
    In particular,
    $\mathrm{supp}(p_0) = \mathcal{X}_0$.
\end{assumption}

\begin{proposition}[Set-theoretic volume bound]
    \label{prop:volume}
    Let Assumption~\ref{assum:density} hold,
    assume the calibration samples are i.i.d.\
    from~$\mu_k$, and suppose that,
    for each input~$\bm{d} \in \mathcal{W}$
    and each time step~$k$, the map
    $\bm{\xi} \mapsto \Phi(t_k;\,\bm{\xi},\,\bm{d})$
    is a diffeomorphism.
    Define
    $J_k = \sup_{\bm{\xi} \in \mathcal{X}_0,\,
    \bm{d} \in \mathcal{W}}
    \lvert\det(D_{\bm{\xi}}
    \Phi(t_k;\,\bm{\xi},\,\bm{d}))\rvert$.
    Then, under the event in
    Theorem~\ref{thm:pac},
    \begin{equation}
        \E_{\bm{D} \sim p_{\bm{d}}}\!\bigl[
        \mathrm{vol}\bigl(
        \Phi(t_k;\,\mathcal{X}_0,\,\bm{D})
        \setminus \hat{\mathcal{R}}(t_k)
        \bigr)\bigr]
        \leq \frac{\alpha\, J_k}{c_0},
        \quad \forall\, k.
        \label{eq:volume_bound}
    \end{equation}
\end{proposition}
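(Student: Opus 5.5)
The plan is a change of variables: pull the missed region at time~$t_k$ back to $\mathcal{X}_0$ through the flow map, bound its volume there via the density floor $c_0$, and push it forward again at the cost of the Jacobian bound $J_k$. Fix $k$ and work on the event of Theorem~\ref{thm:pac}, so that $\Prob_{\bm{x}\sim\mu_k}[\bm{x}\notin\hat{\mathcal{R}}(t_k)]\leq\alpha$. The set $\hat{\mathcal{R}}(t_k)$ is then a fixed (measurable) set, and the only remaining randomness is in $\bm{\Xi}\sim p_0$ and $\bm{D}\sim p_{\bm{d}}$. Because $\bm{\Xi}$ and $\bm{D}$ are sampled independently, Fubini lets us write the miss probability as
\[
\Prob_{\bm{x}\sim\mu_k}\bigl[\bm{x}\notin\hat{\mathcal{R}}(t_k)\bigr]
=\E_{\bm{D}\sim p_{\bm{d}}}\Bigl[\int_{\mathcal{X}_0}\mathbf{1}\bigl[\Phi(t_k;\bm{\xi},\bm{D})\notin\hat{\mathcal{R}}(t_k)\bigr]\,p_0(\bm{\xi})\,d\bm{\xi}\Bigr].
\]

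Next, fix $\bm{d}\in\mathcal{W}$ and write $\Phi_{\bm{d}}=\Phi(t_k;\cdot,\bm{d})$. Define the pulled-back missed set $M_{\bm{d}}=\{\bm{\xi}\in\mathcal{X}_0:\Phi_{\bm{d}}(\bm{\xi})\notin\hat{\mathcal{R}}(t_k)\}$. By Assumption~\ref{assum:density}, $p_0\geq c_0$ on $\mathcal{X}_0$, so the inner integral is at least $c_0\,\mathrm{vol}(M_{\bm{d}})$. Taking expectation over $\bm{D}$ and using the event bound gives
\[
\E_{\bm{D}}\bigl[\mathrm{vol}(M_{\bm{D}})\bigr]\leq \alpha/c_0 .
\]

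It remains to relate $\mathrm{vol}(M_{\bm{d}})$ to the quantity in~\eqref{eq:volume_bound}. The missed region satisfies $\Phi_{\bm{d}}(\mathcal{X}_0)\setminus\hat{\mathcal{R}}(t_k)=\Phi_{\bm{d}}(M_{\bm{d}})$. Since $\Phi_{\bm{d}}$ is a diffeomorphism, the change-of-variables formula applies:
\[
\mathrm{vol}\bigl(\Phi_{\bm{d}}(M_{\bm{d}})\bigr)=\int_{M_{\bm{d}}}\lvert\det D_{\bm{\xi}}\Phi_{\bm{d}}(\bm{\xi})\rvert\,d\bm{\xi}\leq J_k\,\mathrm{vol}(M_{\bm{d}}).
\]
Taking the expectation over $\bm{D}$ and combining with the previous bound yields $\alpha J_k/c_0$. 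This holds for every $k$ simultaneously on the single event of Theorem~\ref{thm:pac}.

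The main obstacle is the measure-theoretic bookkeeping rather than any inequality. First, measurability of $\hat{\mathcal{R}}(t_k)$ is at issue, because the score $s$ involves random noise $\bm{\epsilon}_\tau$. I would treat the noise draws as fixed, so that $s(\cdot,k)$ is a deterministic measurable function, or else absorb the noise into the outer probability. Second, joint measurability of $(\bm{\xi},\bm{d})\mapsto\Phi$ is needed to justify Fubini. I would state both as implicit regularity conditions. If $J_k=\infty$ the bound is vacuous and nothing needs to be proved.
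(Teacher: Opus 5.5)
Your proof is correct and is essentially the paper's argument: both disintegrate over $\bm{d}$ via Fubini and combine $p_0 \geq c_0$ with $\lvert\det D_{\bm{\xi}}\Phi\rvert \leq J_k$ through the change-of-variables formula. The only difference is direction: you pull the missed set back to $M_{\bm{d}}\subseteq\mathcal{X}_0$ and push its volume forward, whereas the paper pushes $p_0$ forward to a conditional density $p_{k\mid\bm{d}} \geq c_0/J_k$ on $\Phi(t_k;\,\mathcal{X}_0,\,\bm{d})$ and integrates that density over the missed region. Your direction never needs a Lebesgue density for $\mu_k$ or a density $p_{\bm{d}}$ on the input space, and by the area formula your bound $\mathrm{vol}(\Phi_{\bm{d}}(M_{\bm{d}})) \leq J_k\,\mathrm{vol}(M_{\bm{d}})$ would even hold without injectivity; your remarks on measurability and on the randomness of $\bm{\epsilon}_\tau$ inside $s$ also address points the paper leaves implicit.
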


\begin{proof}
    Fix~$k$ and let
    $A_k = \mathcal{R}(t_k) \setminus
        \hat{\mathcal{R}}(t_k)$.
    For each $\bm{d} \in \mathcal{W}$, write
    $\Phi_k^{\bm{d}}(\cdot)
    = \Phi(t_k;\, \cdot,\, \bm{d})$.
    Because $\Phi_k^{\bm{d}}$ is a diffeomorphism,
    the change-of-variables formula gives the
    conditional pushforward density
    \[
        p_{k \mid \bm{d}}(\bm{x})
        = \frac{p_0\!\bigl(
            (\Phi_k^{\bm{d}})^{-1}(\bm{x})\bigr)}
               {\bigl\lvert\det\bigl(
                   D\Phi_k^{\bm{d}}\!\bigl(
                   (\Phi_k^{\bm{d}})^{-1}(\bm{x})\bigr)
               \bigr)\bigr\rvert}
        \geq \frac{c_0}{J_k},
    \]
    for every
    $\bm{x} \in \Phi(t_k;\,\mathcal{X}_0,\,\bm{d})$,
    by Assumption~\ref{assum:density} and the
    definition of~$J_k$.
    The marginal density is
    $p_k(\bm{x})
    = \int_{\mathcal{W}} p_{k \mid \bm{d}}(\bm{x})\,
    p_{\bm{d}}(\bm{d})\, d\bm{d}$.
    Theorem~\ref{thm:pac} ensures that, with
    probability $\geq\!1{-}\delta$ over the
    calibration data, simultaneously for
    all~$k$, $\mu_k(A_k) \leq \alpha$.
    Under this event, applying
    Fubini's theorem and the density lower bound,
    \begin{align*}
        \alpha &\geq \mu_k(A_k)
        = \int_{A_k}\! p_k(\bm{x})\,d\bm{x}\\
        &= \int_{\mathcal{W}}
           \!\int_{A_k \cap
           \Phi(t_k;\,\mathcal{X}_0,\,\bm{d})}
           \!p_{k \mid \bm{d}}(\bm{x})\,
           d\bm{x}\; p_{\bm{d}}(\bm{d})\,d\bm{d}\\
        &\geq \frac{c_0}{J_k}
           \int_{\mathcal{W}}
           \mathrm{vol}\bigl(
           \Phi(t_k;\,\mathcal{X}_0,\,\bm{d})
           \setminus \hat{\mathcal{R}}(t_k)
           \bigr)\,p_{\bm{d}}(\bm{d})\,d\bm{d},
    \end{align*}
    and rearranging yields~\eqref{eq:volume_bound}.
\end{proof}

\begin{remark}[Geometric coverage for dissipative systems]
    \label{rem:volume}
    For systems whose ODE vector field~$f$ has
    constant divergence
    $\mathrm{div}(f) \equiv -c \leq 0$,
    Liouville's formula
    \[
        \det(D_{\bm{\xi}}\Phi(t;\,\bm{\xi},\,\bm{d}))
        = \exp\!\biggl(\int_0^{t}
        \mathrm{div}(f)(\Phi(s;\,\bm{\xi},\,\bm{d}))\,
        ds\biggr)
    \]
    gives $J_k = e^{-c\,t_k} \leq 1$,
    so the bound~\eqref{eq:volume_bound}
    tightens over time.
\end{remark}

\section{Experiments}
\label{sec:experiments}

\subsection{Experimental Setup}
\label{sec:impl_details}

We evaluate the framework on three systems of increasing
dimensionality:
a forced Duffing oscillator ($n{=}2$),
a planar quadrotor ($n{=}6$),
and the Gray-Scott reaction-diffusion PDE ($n{=}8{,}192$).

All experiments generate trajectories from initial
conditions drawn independently from $p_0$ supported
on~$\mathcal{X}_0$.
The $N$ trajectories are randomly partitioned into
60\% training, 20\% calibration, and 20\% test
splits at the trajectory level, so that calibration
states at each step~$k$ are i.i.d.\ draws
from~$\mu_k$.
All experiments share the same PAC parameters:
risk level $\alpha{=}0.10\%$ and confidence
parameter $\delta{=}0.2$.

For calibration,
the threshold grid~$\Lambda_k$ at each step~$k$ is a
uniform partition of the interval $[s_{\min,k}, s_{\max,k}]$
into $L{=}2{,}000$ equally spaced candidate thresholds,
where $s_{\min,k}$ and $s_{\max,k}$ are the minimum and
maximum calibration scores at step~$k$.
Each grid is constructed before the LTT procedure begins
and is recomputed identically for every experiment.

To quantify prediction tightness we report
Intersection over Union (IoU) and precision
against a reference reachable set constructed
from test trajectories, where a grid point is
labeled reachable if any test trajectory visits
its Voronoi cell:
\begin{align}
    \mathrm{IoU} &= \frac{
        \lvert\hat{\mathcal{R}}(t_k) \cap
        \mathcal{R}_{\mathrm{ref}}(t_k)\rvert
    }{
        \lvert\hat{\mathcal{R}}(t_k) \cup
        \mathcal{R}_{\mathrm{ref}}(t_k)\rvert
    },
    \label{eq:iou} \\
    \mathrm{Precision} &= \frac{
        \lvert\hat{\mathcal{R}}(t_k) \cap
        \mathcal{R}_{\mathrm{ref}}(t_k)\rvert
    }{
        \lvert\hat{\mathcal{R}}(t_k)\rvert
    },
    \label{eq:precision}
\end{align}
where $\mathcal{R}_{\mathrm{ref}}(t_k)$ is the
reference set and $\lvert\cdot\rvert$ denotes
the number of grid cells.

\subsubsection{Score Design Choices}
The score~\eqref{eq:score} is restricted to
a small set of low-noise diffusion steps,
which both improves discrimination and
reduces the per-query cost to only a few
forward passes through the denoising network.
Substituting the forward process
$\bm{x}_\tau = \sqrt{\bar\alpha_\tau}\bm{x}_0
+ \sqrt{1{-}\bar\alpha_\tau}\bm{\epsilon}$
into the Bayes-optimal prediction
$\bm{\epsilon}^*
= (\bm{x}_\tau - \sqrt{\bar\alpha_\tau}
\hat{\bm{x}}_0)/\sqrt{1{-}\bar\alpha_\tau}$
shows that the per-step error decomposes as
\begin{equation}
    h_\tau(\bm{x},k)
    = \lVert \bm{\epsilon}^* {-} \bm{\epsilon}
    \rVert^2
    = \mathrm{SNR}(\tau)
    \lVert \bm{x} - \hat{\bm{x}}_0 \rVert^2,
    \label{eq:snr_decomp}
\end{equation}
where $\mathrm{SNR}(\tau)
= \bar\alpha_\tau/(1{-}\bar\alpha_\tau)$
and $\hat{\bm{x}}_0
= \E[\bm{x}_0 \mid \bm{x}_\tau, k]$.
At high noise $\mathrm{SNR}(\tau) \to 0$,
which collapses the posterior mean to
$\E_{\mu_k}[\bm{x}_0]$ and renders the score
uninformative. Finite model capacity and
per-sample noise variance amplify this
degradation but are negligible at low noise.

We therefore select only three low-noise diffusion
timesteps $\tau \in \{1, 2, 3\}$ out of $T{=}1{,}000$.
Each timestep is evaluated with an independent noise
realization and the results are averaged,
reducing score variance due to the random
$\bm{\epsilon}$.
Each timestep is repeated $R{=}8$ times, giving
$M{=}3 \times 8 = 24$ forward passes per query.
Both the timestep range and the number of
repeats are selected through ablation.
We weight all selected timesteps equally,
$\gamma_\tau = 1/M$.
ELBO weighting
$\gamma_\tau \propto \beta_\tau/(1{-}\bar\alpha_\tau)$
reduces IoU to $0.605$
because the finite-capacity model approximates the
Bayes-optimal denoiser poorly when the SNR is low,
the training objective uses uniform rather than
ELBO weights, and a single noise realization
introduces variance that grows as the SNR decreases.

\subsubsection{Baseline Selection}
Since the PAC guarantee of Theorem~\ref{thm:pac}
holds for any nonconformity score, we instantiate
the framework with multiple score functions to
isolate the effect of score design on tightness.
Three baselines are considered:
(i)~a normalizing flow (NF),
(ii)~a variational autoencoder (VAE),
and (iii)~the Christoffel-function
method~\cite{devonport2021christoffel}.
NF is compared on the 2D Duffing system,
Christoffel on the Duffing and quadrotor systems,
and VAE on the 8192-dimensional Gray-Scott system.
All baselines except Christoffel are calibrated via LTT.

\subsection{Duffing Oscillator}
\label{sec:duffing}

The forced Duffing oscillator has the state
$\bm{x} = (x,\dot{x})^\top \in \R^2$ and dynamics
\begin{equation}
    \ddot{x} + c\dot{x} - a x + b x^3
    = A\cos(\omega t),
    \label{eq:duffing}
\end{equation}
with parameters
$a{=}1$, $b{=}5$, $c{=}0.02$,
$A{=}8$, $\omega{=}0.5$.
The initial set is
$\mathcal{X}_0 = [-1,1]^2$.
We generate $N{=}10^6$ trajectories recorded at
$K{=}300$ time steps ($\Delta t{=}0.1$).

The denoising network is a 12-layer FiLM-MLP
with $T{=}1{,}000$ diffusion steps,
linear $\beta$ schedule, hidden
dimension 2048 and 128-dimensional time embeddings,
trained for 100 epochs with
AdamW, lr${=}5{\times}10^{-4}$, batch size $131{,}072$.
We compare against a normalizing flow (NF) using
Neural Spline Flows~\cite{durkan2019neural}
with 8~coupling layers and 2.4M~parameters,
and the Christoffel-function
method~\cite{devonport2021christoffel}
with VC-dimension sample bounds.
All methods share the same trajectory data.

Table~\ref{tab:baseline} compares the three methods.
Per-query inference cost of the DDPM score on a
single NVIDIA H200 is $0.14$\,ms
at batch size $10{,}000$.

\begin{table}[t]
    \centering
    \caption{Baseline comparison on the Duffing oscillator.
        All methods except Christoffel use LTT calibration.}
    \label{tab:baseline}
    \small
    \setlength{\tabcolsep}{4pt}
    \renewcommand{\arraystretch}{1.2}
    \begin{tabular}{@{} l ccc @{}}
        \toprule
        Method & IoU & Precision & FNR(\%) \\
        \midrule
        Christoffel~\cite{devonport2021christoffel}
               & 0.381 & 0.381 & 0.00 \\
        NF
               & 0.689 & 0.726 & 0.08 \\
        DDPM (Ours)
               & \textbf{0.887} & \textbf{0.918} & 0.08 \\
        \bottomrule
    \end{tabular}
\end{table}

The DDPM score, whose sublevel sets approximate
density level sets by
Proposition~\ref{prop:score_density},
achieves the highest IoU~$0.887$ and
precision~$0.918$, while
the observed FNR of $0.08\%$ stays below
$\alpha{=}0.10\%$.
Since $p_0$ is uniform on $\mathcal{X}_0$
and $\mathrm{div}(f) = -c = -0.02$ is constant
for the Duffing oscillator,
Remark~\ref{rem:volume} provides an additional
geometric guarantee:
with $c_0 = 1/\mathrm{vol}(\mathcal{X}_0) = 1/4$,
$\mathrm{vol}(A_k) \leq 4\alpha e^{-0.02t_k}
\leq 4\alpha = 0.004$.
At $t{=}30$ the missed volume is at most
$0.0022$, i.e.\ $0.055\%$ of the initial set area.

The NF uses negative log-likelihood, which also
reflects density in principle.
However, a bijective map from a unimodal prior
cannot faithfully represent the disconnected
branches of the Duffing attractor at late
times~\cite{durkan2019neural,rezende2015variational},
which limits its IoU to~$0.689$.

The Christoffel method's VC-dimension sample
bound~\cite{devonport2021christoffel}
requires $d{=}1$ with our calibration set.
Since this bound is known to be
conservative~\cite{devonport2021christoffel},
we sweep over $d \in \{1,\ldots,20\}$
without enforcing it.
Beyond $d{=}14$, Runge-type oscillatory artifacts
emerge and IoU declines monotonically
to $0.170$ at $d{=}20$.
The sweep selects $d{=}11$ as optimal,
yet the resulting $78$~monomials still yield
a predicted set $2.6{\times}$ the true
reachable set with IoU~$0.381$.

Fig.~\ref{fig:duffing_reach} visualizes the predicted
reachable sets at six time steps.
As the chaotic attractor develops interior voids
and thin filaments ($k{\geq}149$),
the DDPM score resolves these fine structures,
correctly excluding gaps between disjoint branches.
The NF progressively fills in the voids
at large~$k$, and the Christoffel polynomial
over-approximates the reachable set.
\begin{figure*}[t]
    \centering
    \setlength{\tabcolsep}{2pt}
    \renewcommand{\arraystretch}{0}
    \begin{tabular}{@{}c@{\;}c@{\;}c@{\;}c@{\;}c@{\;}c@{\;}c@{}}
                                                                             & \small $k{=}49$ & \small $k{=}99$ & \small $k{=}149$ & \small $k{=}199$ & \small $k{=}249$ & \small $k{=}299$ \\[2pt]
        \rotatebox{90}{\small\;DDPM+LTT}                                     &
        \includegraphics[width=0.13\textwidth]{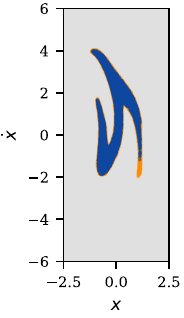}                &
        \includegraphics[width=0.13\textwidth]{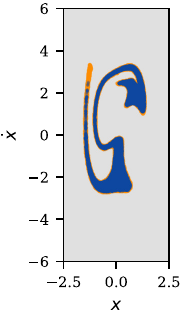}                &
        \includegraphics[width=0.13\textwidth]{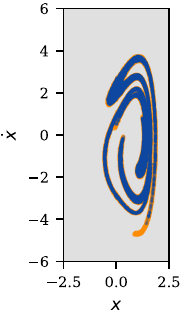}                &
        \includegraphics[width=0.13\textwidth]{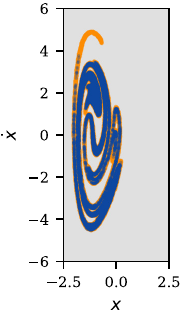}                &
        \includegraphics[width=0.13\textwidth]{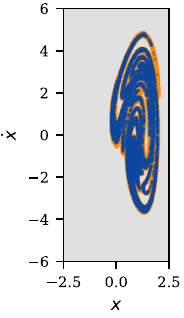}                &
        \includegraphics[width=0.13\textwidth]{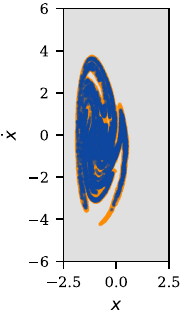}                                                                                                                                 \\[1pt]

        \rotatebox{90}{\small\;NF+LTT}                                       &
        \includegraphics[width=0.13\textwidth]{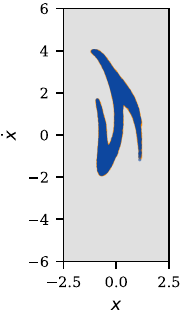}                  &
        \includegraphics[width=0.13\textwidth]{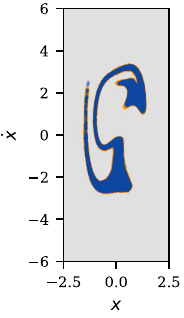}                  &
        \includegraphics[width=0.13\textwidth]{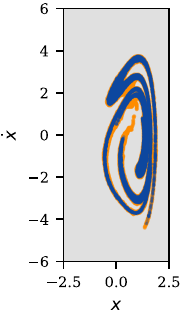}                  &
        \includegraphics[width=0.13\textwidth]{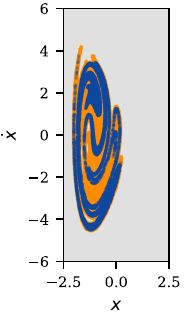}                  &
        \includegraphics[width=0.13\textwidth]{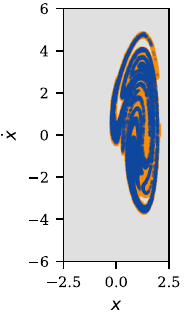}                  &
        \includegraphics[width=0.13\textwidth]{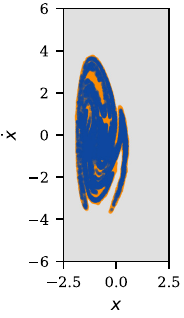}                                                                                                                                   \\[1pt]
        \rotatebox{90}{\small\;Christoffel~\cite{devonport2021christoffel}}                                  &
        \includegraphics[width=0.13\textwidth]{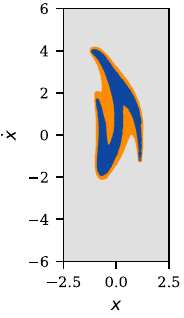} &
        \includegraphics[width=0.13\textwidth]{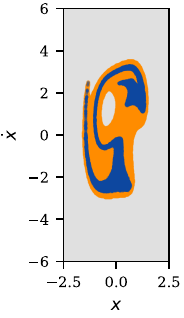} &
        \includegraphics[width=0.13\textwidth]{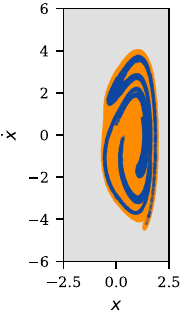} &
        \includegraphics[width=0.13\textwidth]{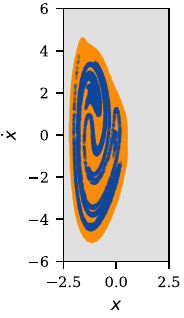} &
        \includegraphics[width=0.13\textwidth]{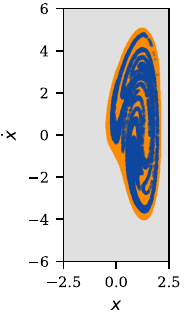} &
        \includegraphics[width=0.13\textwidth]{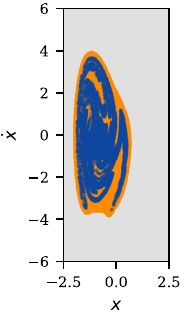}
    \end{tabular}
    \includegraphics[width=0.52\textwidth]{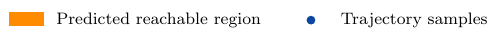}
    \caption{Predicted reachable sets for the Duffing oscillator
        at six time steps~$k$.}
    \label{fig:duffing_reach}
\end{figure*}

The top half of Table~\ref{tab:n_ablation} shows the effect
of the number of training trajectories~$N$ on prediction quality.

\begin{table}[t]
    \centering
    \caption{Ablation studies on the Duffing oscillator.}
    \label{tab:n_ablation}
    \small
    \setlength{\tabcolsep}{4pt}
    \renewcommand{\arraystretch}{1.15}
    \begin{tabular}{@{} l c l r c r @{}}
        \toprule
        Ablation & $N$ & Size & Params & IoU & $T_{\mathrm{query}}$(ms) \\
        \midrule
        \multirow{6}{*}{Data Size} & $10$   & \multirow{6}{*}{$2048{\times}12$} & \multirow{6}{*}{107.5M} & 0.070 & \multirow{6}{*}{0.137} \\
        & $10^2$ & & & 0.133 & \\
        & $10^3$ & & & 0.329 & \\
        & $10^4$ & & & 0.605 & \\
        & $10^5$ & & & 0.803 & \\
        & $10^6$ & & & \textbf{0.887} & \\
        \midrule
        \multirow{5}{*}{Model Capacity} & \multirow{5}{*}{$10^6$} & $128{\times}2$   & 160K   & 0.522 & 0.001 \\
        & & $256{\times}4$   & 847K   & 0.808 & 0.003 \\
        & & $512{\times}6$   & 4.1M   & 0.867 & 0.010 \\
        & & $1024{\times}8$  & 19.1M  & 0.881 & 0.032 \\
        & & $2048{\times}12$ & 107.5M & \textbf{0.887} & 0.137 \\
        \bottomrule
    \end{tabular}
\end{table}
IoU increases monotonically with~$N$, with the
sharpest gain between $N{=}10^3$ and $10^4$.
With only $N{=}10^4$ trajectories the DDPM already
approaches the NF baseline
reported in Table~\ref{tab:baseline}
with IoU${}=0.689$,
while using $100{\times}$ fewer trajectories.
Across all~$N$ and all model sizes, the FNR remains
constant at~$0.08\%$,
because coverage depends solely on
the calibration set, not on model quality.
The bottom half of Table~\ref{tab:n_ablation} shows that a
$512{\times}6$ model with 4.1M parameters already achieves
IoU${}=0.867$, within $0.020$ of
the full $2048{\times}12$ model, while requiring
$26{\times}$ fewer parameters.
Fig.~\ref{fig:n_ablation} visualizes the predicted
reachable set at $k{=}149$.
At $N{=}10^3$ the predicted region already approaches
the Christoffel baseline trained on the full dataset,
and increasing $N$ to $10^6$ progressively refines
boundary details and interior structure.
Even the smallest $128{\times}2$ network resolves
the interior void, and wider architectures sharpen
fine-scale features.
\begin{figure*}[t]
    \centering
    \setlength{\tabcolsep}{1pt}
    \renewcommand{\arraystretch}{0}
    \begin{tabular}{@{}c@{\;}c@{\;}c@{\;}c@{\;}c@{\;}c@{}}
        \small $N{=}10$ & \small $N{=}10^2$ & \small $N{=}10^3$ & \small $N{=}10^4$ & \small $N{=}10^5$ & \small $N{=}10^6$ \\[2pt]
        \includegraphics[width=0.13\textwidth]{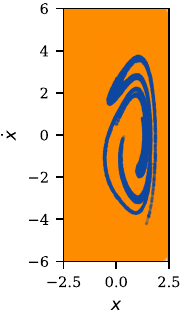}     &
        \includegraphics[width=0.13\textwidth]{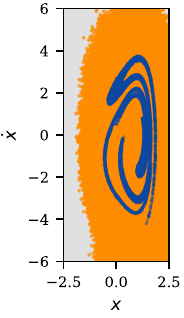}    &
        \includegraphics[width=0.13\textwidth]{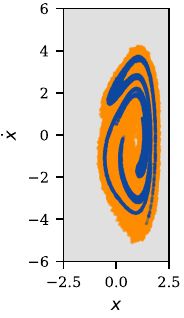}   &
        \includegraphics[width=0.13\textwidth]{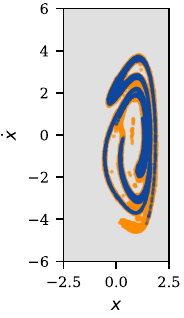}  &
        \includegraphics[width=0.13\textwidth]{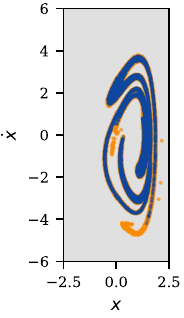} &
        \includegraphics[width=0.13\textwidth]{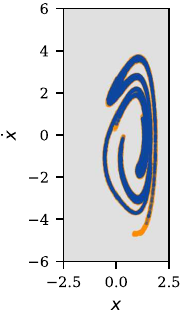} \\[4pt]
    \end{tabular}\\[4pt]
    \begin{tabular}{@{}c@{\;}c@{\;}c@{\;}c@{\;}c@{}}
        \small $128{\times}2$ & \small $256{\times}4$ & \small $512{\times}6$ & \small $1024{\times}8$ & \small $2048{\times}12$ \\[2pt]
        \includegraphics[width=0.13\textwidth]{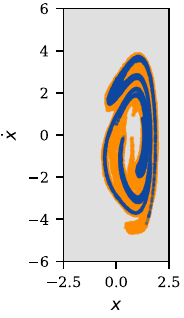}  &
        \includegraphics[width=0.13\textwidth]{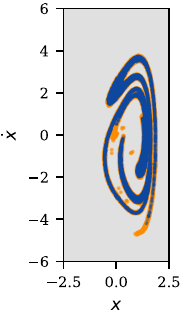}   &
        \includegraphics[width=0.13\textwidth]{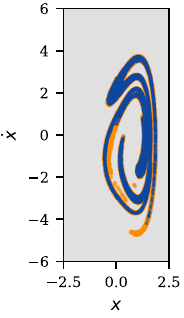}    &
        \includegraphics[width=0.13\textwidth]{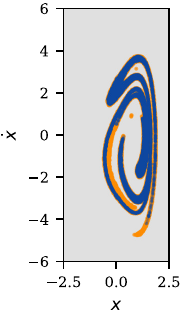}    &
        \includegraphics[width=0.13\textwidth]{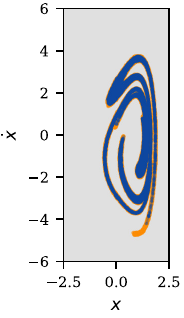}
    \end{tabular}\\[2pt]
    \includegraphics[width=0.52\textwidth]{legend.pdf}
    \caption{Ablation visualizations at $k{=}149$.
        Top row: effect of training data size~$N$
        (model $2048{\times}12$).
        Bottom row: effect of model capacity
        (hidden dim ${\times}$ layers, $N{=}10^6$).}
    \label{fig:n_ablation}
\end{figure*}

To empirically verify the $\delta$-level guarantee,
we repeat the calibration--evaluation procedure
$500$ times with independent random splits
of a $5{\times}10^6$-point pool into
disjoint calibration and test sets.
All $500$ trials satisfy
$\mathrm{FNR} \le \alpha$
with mean~$0.050\%$ and maximum~$0.062\%$,
verifying the PAC guarantee,
which requires a pass rate of at least $1{-}\delta = 80\%$.

\subsection{Planar Quadrotor}
\label{sec:quadrotor}

The planar quadrotor model from
Devonport et al.~\cite{devonport2021christoffel}
has full state
$\bm{x}=(x,h,\theta,\dot{x},\dot{h},\dot{\theta})^\top
\in\R^6$,
where $x$ is the horizontal position,
$h$ is the altitude,
and $\theta$ is the pitch angle.
The dynamics are
\begin{align}
    \ddot{x}      &= u_1 K \sin\theta, \label{eq:quad_x}\\
    \ddot{h}      &= -g + u_1 K \cos\theta, \label{eq:quad_h}\\
    \ddot{\theta}  &= -d_0\theta - d_1\dot{\theta} + n_0 u_2,
    \label{eq:quad_theta}
\end{align}
with gravitational acceleration $g{=}9.81$,
rotor gain $K{=}0.89/1.4$,
and damping parameters $d_0{=}70$, $d_1{=}17$, $n_0{=}55$.
The constant inputs $u_1$ and $u_2$
represent the thrust magnitude and the desired
pitch angle, respectively.
The initial set is a hyperrectangle
$\mathcal{X}_0 \subset \R^6$
following~\cite{devonport2021christoffel}.
The safety specification concerns the
horizontal position and altitude only,
so the reachable set is analyzed in the
reduced state space spanned by $x$ and $h$
at the final time $t_1{=}5.0$.

We generate $N{=}10^6$ trajectories with
inputs drawn uniformly from
$u_1 \in [-1.5 + g/K,\, 1.5 + g/K]$
and $u_2 \in [-\pi/4,\pi/4]$,
and record each trajectory at the terminal time only.
The denoising network is a 6-layer FiLM-MLP
with $T{=}1{,}000$ diffusion steps,
linear $\beta$ schedule, hidden dimension 512,
64-dimensional time embeddings,
and positional encoding with 10 frequency bands,
trained for 500 epochs with AdamW,
lr${=}5{\times}10^{-4}$, batch size $65{,}536$.

The Christoffel-function
baseline~\cite{devonport2021christoffel}
operates on the reduced state $(x,h)$
with polynomial degree $d{=}4$,
which requires $15$ monomials.
At $\alpha{=}0.10\%$ the VC-dimension sample bound
requires $N{\geq}2{,}399{,}222$,
which exceeds the available training data,
so the Christoffel result does not carry
a formal guarantee.

The DDPM method achieves IoU~$0.827$ on the $(x,h)$ projection.
The observed miss rate is $0.091\%$,
below the target level $\alpha{=}0.10\%$.
The Christoffel function achieves IoU~$0.583$.
The degree-4 polynomial sublevel set captures the
overall shape of the reachable set but
cannot resolve fine boundary
details that the DDPM score recovers.
Fig.~\ref{fig:quadrotor_reach} compares the two
predicted sets.

\begin{figure}[t]
    \centering
    \setlength{\tabcolsep}{1pt}
    \begin{tabular}{@{}cc@{}}
        \includegraphics[width=0.49\columnwidth]{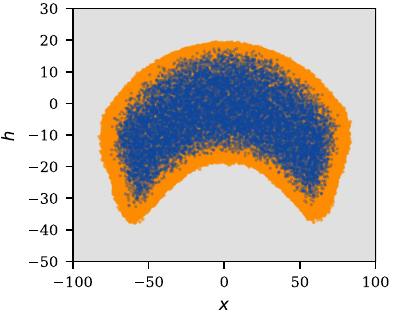} &
        \includegraphics[width=0.49\columnwidth]{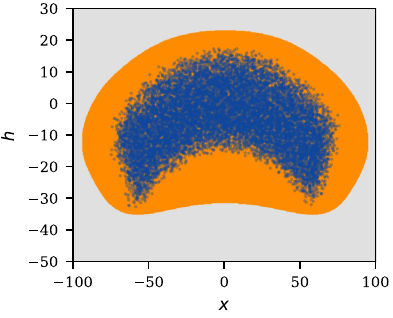} \\
        {\small DDPM+LTT} & {\small Christoffel~\cite{devonport2021christoffel}}
    \end{tabular}
    \caption{Predicted reachable sets for the
        planar quadrotor projected onto
        $(x,h)$ at $t{=}5.0$.}
    \label{fig:quadrotor_reach}
\end{figure}

\subsection{Gray-Scott Reaction-Diffusion}
\label{sec:gray_scott}

The Gray-Scott model~\cite{gray1983autocatalytic}
describes a two-component reaction-diffusion system on a
$64 \times 64$ periodic domain, yielding states
$\bm{x} \in \R^{2 \times 64 \times 64}$:
\begin{align}
    \frac{\partial u}{\partial t}
     & = D_u \nabla^2 u - u v^2 + F(1 - u),
    \label{eq:gs_u}                           \\
    \frac{\partial v}{\partial t}
     & = D_v \nabla^2 v + u v^2 - (F + \kappa)v.
    \label{eq:gs_v}
\end{align}
We set $D_u{=}0.2$, $D_v{=}0.1$, $F{=}0.055$,
$\kappa{=}0.062$, in the spot pattern regime, and simulate
$N{=}10{,}000$ trajectories for $K{=}50$ time steps,
where each step corresponds to 200 PDE solver iterations,
covering the transient to steady-state transition.

The denoising model is a conditional U-Net with
$T{=}1{,}000$ diffusion steps,
linear $\beta$ schedule, base channel width 64
and channel multipliers $(1,2,4)$,
where the physical time enters through cross-attention,
trained for 50 epochs with
AdamW, lr${=}10^{-4}$, batch size 16.
As the only baseline that scales to this dimensionality,
we compare against a variational autoencoder (VAE) with a CNN
encoder-decoder and FiLM time
conditioning~\cite{perez2018film} that uses the
negative ELBO as its nonconformity score,
following the reconstruction probability
framework~\cite{an2015variational}.
Unlike the DDPM, which evaluates the denoising error
across multiple noise scales, the VAE produces a
single-step reconstruction, testing whether the
multi-scale structure of diffusion models provides
a richer score signal.

On a held-out test set of $10^5$ state-time pairs,
the DDPM achieves $\mathrm{FNR}{=}0.008\%$,
below the
target level of~$\alpha{=}0.10\%$.
Over $500$ independent calibration-test splits,
$491$ satisfy $\mathrm{FNR} \leq \alpha$,
for a pass rate of~$98.2\%$,
well above the $1{-}\delta = 80\%$
theoretical floor.
Per-query inference cost on a single NVIDIA H200
at batch size $128$ is $122$ms for the DDPM.

Since grid-based IoU is infeasible
in $n{=}8{,}192$ dimensions,
we evaluate prediction set tightness via
the acceptance rate (AR) under perturbation:
real
training samples are corrupted with additive Gaussian noise
of magnitude $\sigma \cdot \bm{\sigma}_{\mathrm{px}}$,
where $\bm{\sigma}_{\mathrm{px}}$ is the per-pixel
standard deviation of the training data, and the fraction
classified as reachable is recorded as a function
of~$\sigma$.

Fig.~\ref{fig:gs_sensitivity}a shows the result.
The DDPM score exhibits a steeper and earlier
transition than the VAE: AR drops from $100\%$ to
$0\%$ within a narrow band around
$\sigma{\approx}0.01$.
The VAE baseline produces a gradual transition,
AR remains above $99\%$ until $\sigma{\approx}0.04$
and declines to $0\%$ only at $\sigma{\approx}0.2$,
making it roughly $20{\times}$ less sensitive
than the DDPM.

By the SNR decomposition~\eqref{eq:snr_decomp},
the score grows with
$\lVert\bm{x} - \hat{\bm{x}}_0\rVert^2$,
which in high dimensions scales as $\sigma^2 n$
for a Gaussian perturbation of
magnitude~$\sigma$~\cite{graham2023denoising}.
Fig.~\ref{fig:gs_sensitivity}b,c shows the score
distributions for four categories:
clean in-set states, perturbed in-set states at
$\sigma{=}0.01$ and $\sigma{=}0.02$, and
out-of-distribution samples generated with
PDE parameters $F{=}0.025$, $\kappa{=}0.060$, which
produce stripe patterns instead of the training spots.
The DDPM score assigns distinct levels to all
four groups, and the calibrated threshold falls
between $\sigma{=}0.01$ and $\sigma{=}0.02$ states,
closer to $\sigma{=}0.01$, so perturbations at
$\sigma{\geq}0.02$ and different-parameter samples
are reliably rejected.
In contrast, the VAE score distribution shows
heavy overlap among clean, $\sigma{=}0.01$, and
$\sigma{=}0.02$ states, all remaining below the
calibrated threshold.

\begin{figure*}[t]
    \centering
    \begin{subfigure}[t]{0.33\textwidth}
        \centering
        \includegraphics[width=\textwidth]{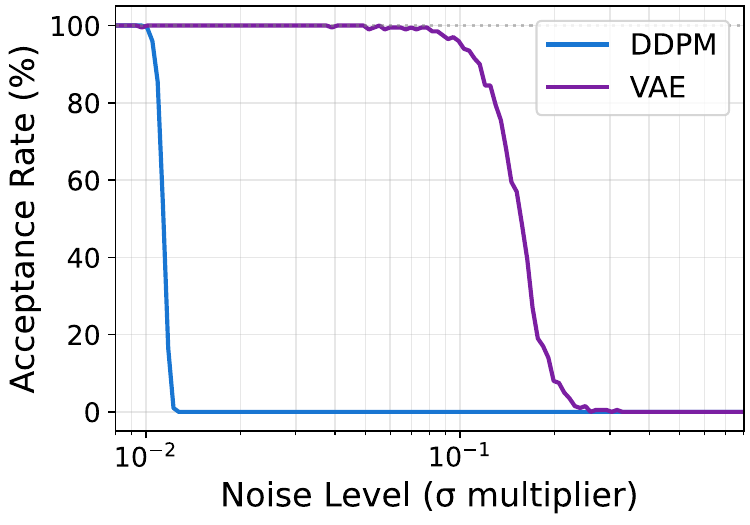}
        \vspace{-6mm}
        \caption{Acceptance rate vs.\ perturbation}
    \end{subfigure}\hfill
    \begin{subfigure}[t]{0.33\textwidth}
        \centering
        \includegraphics[width=\textwidth]{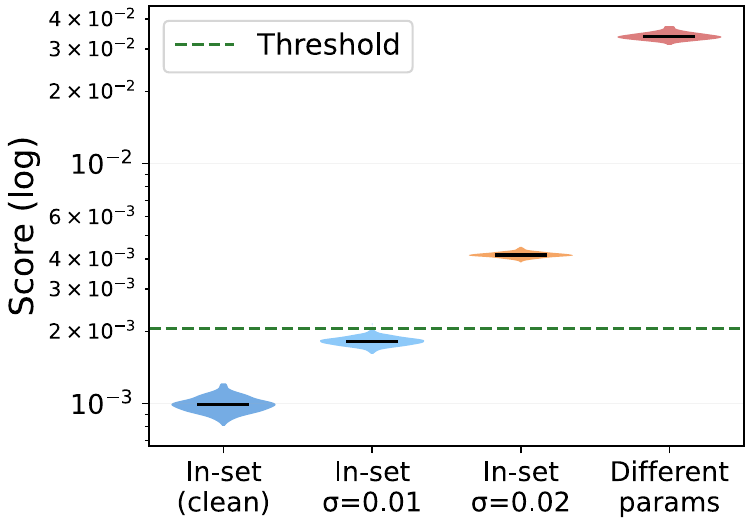}
        \vspace{-6mm}
        \caption{DDPM score distribution}
    \end{subfigure}\hfill
    \begin{subfigure}[t]{0.33\textwidth}
        \centering
        \includegraphics[width=\textwidth]{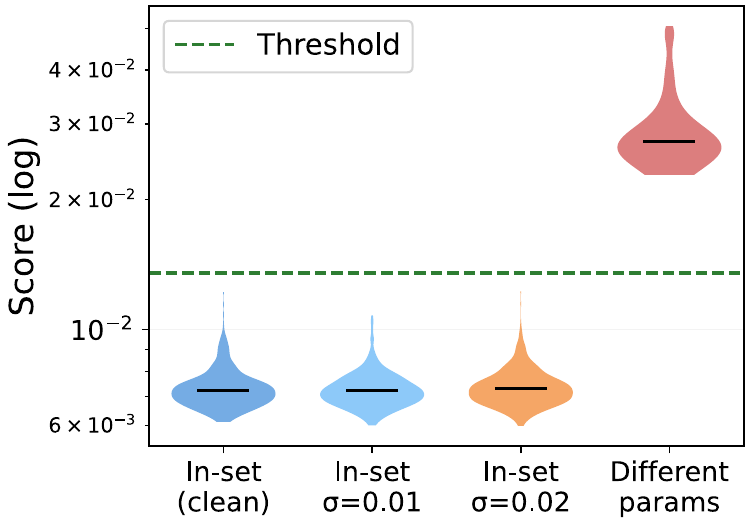}
        \vspace{-6mm}
        \caption{VAE score distribution}
    \end{subfigure}
    \caption{Sensitivity analysis for the Gray-Scott system
        at $t{=}29$.
        (a)~Acceptance rate under Gaussian perturbation.
        (b),(c)~Score distributions for clean in-set,
        perturbed in-set, and different-parameter samples.}
    \label{fig:gs_sensitivity}
\end{figure*}

\section{Conclusion}
\label{sec:conclusion}

We introduced a model-free reachability analysis
framework with PAC coverage guarantees that scales
to state dimensions beyond the reach of classical
grid-based and polynomial methods.
The method operates directly on trajectory data,
calibrating diffusion-based reconstruction error
via LTT to bound the miss rate simultaneously
over the entire time horizon.
Under a mild density lower-bound assumption,
the distributional guarantee upgrades to a
geometric volume bound on the missed portion
of the reachable set,
with the bound tightening
for dissipative systems via Liouville's formula.
Experiments on a forced Duffing oscillator,
a planar quadrotor, and
a Gray-Scott reaction-diffusion system with
8192-dimensional states confirm the coverage
guarantee and the advantage of the
diffusion-based score over baselines.

\bibliographystyle{IEEEtran}
\bibliography{ref}

\end{document}